\newcommand{\bea}{\begin{eqnarray}}
\newcommand{\eea}{\end{eqnarray}}
\newtheorem{theorem}{Theorem}[section]
\newtheorem{lemma}[theorem]{Lemma}
\theoremstyle{remark}
\newtheorem{definition}[theorem]{Definition}
\newtheorem{example}[theorem]{Example}
\numberwithin{equation}{section}
\numberwithin{theorem}{section}
\newcommand{\sfSp}{{\mathsf{Sp}}}
\newcommand{\sfO}{{\mathsf O}}
\newcommand{\sfSO}{{\mathsf{SO}}}
\newcommand{\sfSU}{{\mathsf{SU}}}
\newcommand\cA{\mathcal{A}}
\newcommand\bCP{\mathbb{C}\mathrm{P}}
\newcommand\bZ{\mathbb{Z}}
\newcommand\Diff{\operatorname{Diff}}
\newcommand\Isom{\operatorname{Isom}}
\newcommand{\longhookrightarrow}{\ensuremath{\lhook\joinrel\relbar\joinrel\rightarrow}}
\renewcommand{\H}{{\operatorname{H}}}
\def\K{ \hbox{\rm K}}
\newcommand{\Z}{\ensuremath{\mathbb Z}}
\newcommand{\RR}{{\mathbb R}}
\newcommand{\ZZ}{{\mathbb Z}}
\begin{document}

\title[Spherical T-duality II]
{Spherical T-duality II: \\
An infinity of spherical T-duals for\\  non-principal $\sfSU(2)$-bundles}

\author[P Bouwknegt]{Peter Bouwknegt}

\address[Peter Bouwknegt]{
Mathematical Sciences Institute, and
Department of Theoretical Physics,
Research School of Physics and Engineering, 
The Australian National University, 
Canberra, ACT 0200, Australia}

\email{peter.bouwknegt@anu.edu.au}

\author[J. Evslin]{Jarah Evslin}

\address[Jarah Evslin]{
High Energy Nuclear Physics Group, 
Institute of Modern Physics,
Chinese Academy of Sciences,
Lanzhou, China}

\email{jarah@impcas.ac.cn}

\author[V Mathai]{Varghese Mathai}

\address[Varghese Mathai]{
Department of Pure Mathematics,
School of  Mathematical Scienes, 
University of Adelaide, 
Adelaide, SA 5005, 
Australia}

\email{mathai.varghese@adelaide.edu.au}

\begin{abstract}
Recently we initiated the study of spherical T-duality 
for spacetimes that are 
principal $\sfSU(2)$-bundles \cite{BEM14}. In this paper, we extend spherical T-duality 
to spacetimes that are oriented {\em non-principal} $\sfSU(2)$-bundles. There  are 
several interesting new examples in this case and a new phenomenon 
appearing in the non-principal case is the existence of infinitely many 
spherical T-duals.
\end{abstract}

\maketitle


\tableofcontents

\section{Introduction}
\label{sec:intro}

T-duality for pairs consisting of a circle bundle, together with a degree 3 H-flux, 
was originally studied in detail in \cite{BEM, BEM2, BHM, BHM05}
with contributions by several others later.  In string theory, T-dual pairs are distinct compactification 
manifolds that cannot be distinguished by any experiment, which is the notion of isomorphism relevant in physics.  
This equivalence in physics implies the isomorphisms of a number of other mathematical structures, such as Courant 
algebroids \cite{cavalcanti}, generalized complex structures \cite{cavalcanti} and twisted K-theory \cite{BEM}, see also 
\cite{BS,RR88}. It turns out that all of these structures are physically relevant.

Recently we initiated the study of spherical T-duality 
for principal $\sfSU(2)$-bundles in \cite{BEM14}. Let $P$ be a principal $\sfSU(2)$-bundle over $M$ and $H$ a 7-cocycle on $P$,
\begin{equation}\label{SU(2)1}
\begin{CD}
\sfSU(2) @>>> \,  P \\
&& @V \pi VV \\
&& M \end{CD}
\end{equation}

Principal $\sfSU(2)$-bundles over a compact oriented four dimensional manifold $M$ are classified 
by $\H^4(M;\ZZ) \cong \ZZ$ via the 2nd Chern class $c_2(P)$. This can be seen using the well known isomorphism,
$\H^4(M;\ZZ) \cong [M, S^4]\cong \ZZ$ and noting that there is a canonical principal $\sfSU(2)$-bundle 
$P\to S^4$, known as the Hopf bundle, whose 2nd Chern class is the generator of $\H^4(S^4;\ZZ)\cong \ZZ$.  
The orientation of $M$ and $\sfSU(2)$ imply that $\pi_*$ is a canonical isomorphism $\H^7(P;\ZZ)\cong\H^4(M;\ZZ)\cong\Z$.  
The dual principal $\sfSU(2)$-bundle,
\begin{equation}\label{SU(2)1}
\begin{CD}
\sfSU(2) @>>> \,  \widehat P \\
&& @V \widehat\pi VV \\
&& M \end{CD}
\end{equation}
is defined by $c_2(\widehat P)=\pi_*H$ while the dual 7-cocycle $\widehat H\in\H^7(\widehat P)$ satisfies $c_2(P) = 
{\widehat \pi}_*{\widehat H}$ by the isomorphism $\widehat\pi_* \colon \H^7(\widehat P;\ZZ)\cong\H^4(M;\ZZ)\cong\Z$.  
We proved that this spherical T-duality map induces degree-shifting isomorphisms between the real and integral twisted 
cohomologies of $P$ and $\widehat P$ and also between the 7-twisted K-theories.

Beyond dimension 4 the situation becomes more complicated as not all integral 4-cocycles of $M$ are realized as the 2nd
Chern class  of 
a principal $\sfSU(2)$-bundle $\pi:P\rightarrow M$, and multiple bundles can have the same $c_2(P)$.  We refer the 
reader to \cite{BEM14} for precise statements of spherical T-duality in the higher dimensional case. 
In string theory, spherical T-duality does not imply an isomorphism of two compactifications, but only of some of the 
data of these compactifications corresponding to certain conserved charges \cite{BEM14}.  The isomorphism of 
conserved charges is implied by the fact that spherical T-duality induces an isomorphism on twisted cohomology.

In this paper, we extend spherical T-duality 
to (oriented) non-principal $\sfSU(2)$-bundles. 
While principal $\sfSU(2)$-bundles correspond to unit sphere bundles
of quaternionic line bundles,  (oriented) non-principal $\sfSU(2)$-bundles
correspond to unit sphere bundles
of rank $4$ oriented real Riemannian vector bundles.
A striking new phenomenon in the non-principal case is 
that when the base $M$ is a compact oriented simply-connected 4 dimensional manifold, and given 
an oriented non-principal $\sfSU(2)$-bundle $E$ with $H$ a 7-cocycle on $E$,
then for each integer, we will show that there is an {\em infinite} lattice of spherical T-duals with 7-cocycle flux over $M$,
in stark contrast to the case of principal $\sfSU(2)$-bundles as described above.  
One reason is because in the non-principal
bundle case, the Euler class does not determine (oriented) non-principal $\sfSU(2)$-bundles when the base is as above, 
but in addition the 2nd Stiefel-Whitney
class and the Pontryagin class are also needed for the classification. However, it 
is only the Euler class (and its transgression) that is needed in the Gysin sequence and also to prove the isomorphisms
of 7-twisted integral cohomologies, and in addition the 2nd Stiefel-Whitney class is needed to prove the isomorphisms
of 7-twisted K-theories.  Since we no longer insist that the $\sfSU(2)$ bundles be principal, there are now a number of 
interesting new examples available, such as the spherical T-duality of Aloff-Wallach spaces with 7-cocycle flux described in Section 4. 
One relevant class of compactifications in string theory 
which has been of great interest over the past 30 years is that of Sasakian manifolds, 
 described in Section \ref{Sasakian manifolds}.\\

\noindent{\bf Acknowledgements.} The last author is grateful to Diarmuid Crowley for useful information 
regarding   $\sfSU(2)$-bundles. 
JE is supported by NSFC MianShang grant 11375201.
The research of PB and VM was supported under
Australian Research Council's Discovery Project funding scheme
(Project numbers DP110100072 and DP130103924).


\section{Oriented non-principal $\sfSU(2)$-bundles}
\label{sec:non-principal}

We begin with a few definitions. Recall that
\begin{definition}
\label{def:spherebundle}
A \emph{principal $\sfSU(2)$-bundle} $P$ over
a space $M$,
\begin{equation}\label{circle1}
\begin{CD}
\sfSU(2) @>>> \,  P\\
&& @V \pi VV \\
&& M \end{CD}
\end{equation}
is a map $P\xrightarrow{\pi} M$, where $P$ is a space
equipped with a free action of $\sfSU(2)$ 
for which $\pi$ is the projection onto the orbit space $M=P/\sfSU(2)$ and the
map is locally trivial, in the sense that for any $x\in X$, there exists
an open neighbourhood $U$ containing $x$ for which one has a commuting diagram
\[
\xymatrix{p^{-1}(U) \ar[rr]^{\cong} \ar[dr]^\pi && U \times \sfSU(2)
\ar[dl]_{\text{pr}_1}\\
& \,U &}
\]
An (oriented) \emph{non-principal $\sfSU(2)$-bundle} $E$ over $M$, on the other hand, is a
fiber bundle over $M$ with fiber $\sfSU(2)$ and structure group
$\Diff_+(\sfSU(2))$, the orientation preserving diffeomorphisms of $\sfSU(2)$. 
Since the inclusion $\sfSO(4)=\Isom_+(\sfSU(2))\longhookrightarrow 
\Diff_+(\sfSU(2))$ of the orientation preserving isometries of $\sfSU(2)$ into the 
orientation preserving diffeomorphisms of $\sfSU(2)$ is a
homotopy equivalence by Hatcher's theorem \cite{Hatcher}, it follows that 
the quotient space $\sfSO(4) \backslash\Diff_+(\sfSU(2))$ is contractible, therefore
{\em there is no loss of generality} in assuming that a general (oriented)
non-principal $\sfSU(2)$ bundle $E$ is a
fiber bundle over $M$ with fiber $\sfSU(2)$ and structure group
$\sfSO(4)$.
\end{definition}

Recall that $\sfSU(2)=\sfSp(1)$ is canonically identified with the group of unit quaternions, 
and consider the surjective morphism
$$
\sfSp(1) \times \sfSp(1) \longrightarrow \sfSO(4) 
$$
sending a pair of unit quaternions $(z,w)$ to the map $\RR^4 \rightarrow \RR^4$, $x \mapsto z x w$ 
using quaternion multiplication, where we identify $\RR^4$ with the quaternions $\mathbb{H}$. 
The kernel of this map consists of $(1, 1)$ and $(-1, -1)$ generating $\ZZ_2$, since $\RR$ is central in 
$\mathbb{H}$. Therefore $$\sfSO(4)  \cong (\sfSp(1) \times \sfSp(1))/\ZZ_2.$$
It follows that  $\pi_3(\sfSO(4)) = \ZZ \oplus \ZZ$. We can choose an explicit identification as follows: given 
${(p,q) \in \mathbb{Z}}$, we have a map ${\phi_{(p,q)}: \sfSp(1) \rightarrow \sfSO(4)}$ which sends a unit quaternion ${u}$ to the map
$$
 \phi_{(p,q)}(u): \mathbb{R}^4 \rightarrow \mathbb{R}^4, \ \phi_{(p,q)}(u)(x) = u^p x u^q\,. 
$$
Recall that principal $\sfSO(4)$-bundles $Q$ over $S^4$ are classified by $\pi_3(\sfSO(4)) = \ZZ \oplus \ZZ$. 
Let $(p, q) \in \pi_3(\sfSO(4))$ define a principal $\sfSO(4)$-bundle $Q(p,q)$ over $S^4$, by using the clutching
function $\phi_{(p,q)}$. By 
\cite{Milnor56}, the first Pontryagin class $p_1(Q(p,q)) = 2(p-q)$ and by \cite{Steen}, 
the Euler class $e(Q(p,q)) = p+q$. Therefore the  first Pontryagin class together with the Euler class 
determine all principal $\sfSO(4)$-bundles over $S^4$. An example of a principal $\sfSO(4)$-bundle 
over $S^4$ is the oriented frame bundle
$$
\begin{CD}
\sfSO(4)@>>> \,  \sfSO(5)\\
&& @V \pi VV \\
&& S^4
\end{CD}
$$
We calculate $(p,q)$ for this example. The Euler class of this bundle is just the Euler class
of $S^4$ and is equal to 2. So $p+q=2$. The Pontryagin class of this bundle is just the Pontryagin class of
of $S^4$, which is zero since $S^4$ is the oriented boundary of the 5 dimensional ball. So $2(p-q)=0$.
Therefore $p=1=q$, and $Q(1,1)=\sfSO(5)$.

We summarise the above, studied in \cite{CE}.

\begin{lemma}
\label{lemma:nonprincipal SU(2)}
Any principal $\sfSO(4)$-bundle $Q$ over $S^4$
is classified by the invariants $p_1(Q)$ and $e(Q)$ in 
$H^4(S^4;\ZZ) \cong \ZZ$.
\end{lemma}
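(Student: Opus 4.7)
The statement is essentially a linear-algebra corollary of the facts already assembled in the paragraph preceding it, so my plan is to stitch those ingredients together rather than introduce anything new.

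My first step is to recall the standard classification of principal $\sfSO(4)$-bundles over the sphere $S^4$: via the clutching construction, isomorphism classes are in bijection with the homotopy classes of maps $S^3 \to \sfSO(4)$, i.e.\ with $\pi_3(\sfSO(4))$. Using the double cover $\sfSp(1)\times\sfSp(1)\to\sfSO(4)$ recorded above, I identify $\pi_3(\sfSO(4))\cong\ZZ\oplus\ZZ$ with basis given by the generators of $\pi_3(\sfSp(1))$ in each factor, so that every principal $\sfSO(4)$-bundle over $S^4$ is of the form $Q(p,q)$ for a unique pair $(p,q)\in\ZZ\oplus\ZZ$, with clutching function $\phi_{(p,q)}$.

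Next I invoke the two characteristic-class computations stated just above the lemma: Milnor's formula $p_1(Q(p,q)) = 2(p-q)$ and Steenrod's formula $e(Q(p,q)) = p+q$, both interpreted as integers under the identification $\H^4(S^4;\ZZ)\cong\ZZ$ coming from the orientation. This gives an explicit map
\[
\Psi\colon \ZZ\oplus\ZZ \longrightarrow \ZZ\oplus\ZZ, \qquad (p,q)\longmapsto \bigl(p_1(Q(p,q)),\,e(Q(p,q))\bigr) = \bigl(2(p-q),\,p+q\bigr).
\]
The lemma is then equivalent to the injectivity of $\Psi$.

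Finally, injectivity is immediate from linear algebra: the matrix $\bigl(\begin{smallmatrix}2 & -2\\ 1 & 1\end{smallmatrix}\bigr)$ has determinant $4\neq 0$, so the system $2(p-q)=2(p'-q')$, $p+q=p'+q'$ forces $(p,q)=(p',q')$. Equivalently, $(p,q)$ can be recovered from $(p_1,e)$ by $p=\tfrac{1}{2}(e+\tfrac{1}{2}p_1)$ and $q=\tfrac{1}{2}(e-\tfrac{1}{2}p_1)$ (the image of $\Psi$ being the sublattice where $p_1$ is even and $e\equiv \tfrac{1}{2}p_1\pmod 1$, but this is not needed for the lemma). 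Since $(p,q)$ determines $Q$ up to isomorphism by the clutching classification, the pair $(p_1(Q),e(Q))$ classifies $Q$. There is no genuine obstacle; the only point that warrants a sentence of care is the compatibility of the orientation conventions used in Milnor's and Steenrod's formulas with the generator of $\H^4(S^4;\ZZ)$ coming from the chosen orientation on $S^4$, and that is already implicit in the statements cited.
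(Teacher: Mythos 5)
Your proposal is correct and follows essentially the same route as the paper: the lemma there is explicitly offered as a summary of the preceding paragraph, which classifies principal $\sfSO(4)$-bundles over $S^4$ by $\pi_3(\sfSO(4))\cong\ZZ\oplus\ZZ$ via clutching and cites Milnor's formula $p_1(Q(p,q))=2(p-q)$ and Steenrod's formula $e(Q(p,q))=p+q$. Your only addition is to make the final injectivity step explicit via the determinant computation, which the paper leaves implicit.
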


Let $M$ be a compact, oriented, 4 dimensional manifold. 
Then for each pair of integers $(p,q)$, define a principal $\sfSO(4)$-bundle $Q_M(p, q)$ over $M$ as follows. For any degree one map $f: M \to S^4$, 
set $Q_M(p, q) = f^*(Q(p, q))$. Then the Euler number of $Q_M(p, q)$ is $e(Q_M(p, q)) = p+q \in \bZ \cong \H^4(M;\ZZ)$ and the 1st 
Pontryagin number of $Q_M(p, q)$ is 
$p_1(Q_M(p, q)) = 2(p-q) \in  \bZ \cong \H^4(M;\ZZ)$. \\

However these are not the only principal $\sfSO(4)$-bundles over $M$. We assume that $M$ is  simply-connected here.
A simple obstruction theory argument (cf. \cite{DW}) shows that a principal $\sfSO(4)$-bundle $Q$ over $M\setminus B(R, x_0)$,
where $B(R, x_0)$ is a small $4D$-ball of radius $R$ centered at $x_0 \in M$, is classified by the second Stiefel-Whitney class
$w_2(Q) \in \H^2(M\setminus B(R, x_0);\ZZ_2) \cong \H^2(M;\ZZ_2)$. Now the restriction of $Q$ to the boundary 
$\partial(M\setminus B(R, x_0))=S^3$ is determined 
by a map $h: S^2 \to \sfSO(4)$. But any such map is null homotopic since $\pi_2(\sfSO(4))\cong 0$, therefore the restriction of $Q$ to the boundary
$S^3$ is trivializable. 
Gluing the bundle over the $4D$ ball $B(R, x_0)$ corresponds to fixing trivialisations over the boundary $S^3$ and also 
prescribing a map $S^3\mapsto \sfSO(4)$. This fixes the Pontryagin class $p_1$ and the Euler class 
$e$ as shown earlier. Then the main Theorem in \cite{DW} (with useful clarifications in \cite{CE,EZ}) shows that \\

\begin{lemma}
\label{lemma:nonprincipal SU(2) on M}
Any principal $\sfSO(4)$-bundle $Q$ over a simply-connected, compact, oriented 4 dimensional manifold $M$ 
is classified by the invariants $w_2(Q) \in H^2(M; \ZZ_2)$,  $p_1(Q)$ and $e(Q)$ in 
$\H^4(M;\ZZ) \cong \ZZ$. There is no constraint on $w_2(Q)$. Let $b \in H^2(M; \ZZ)$ be such that
$w_2(Q)=b \mod 2$. Then $p_1(Q) = 2(p-q) + \beta$ where $b\cup b = \beta \in \ZZ \cong \H^4(M;\ZZ)$,
and $e(Q) = p+q$  for some $p, q \in \ZZ$.
Hence we can parametrize $Q= Q_M(\beta, p, q)$, where $\beta, p, q$ are integers satisfying the constraints above.
\end{lemma}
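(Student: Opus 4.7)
The plan is to prove the classification in three stages, following the obstruction-theoretic outline sketched just before the lemma, with the key new ingredient being an explicit reference bundle that pins down the additive constant $\beta$ in $p_1$.

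First, I would exploit a CW structure on $M$ with a single $0$-cell, some $2$-cells and a single $4$-cell, which is possible since $M$ is simply-connected, compact, oriented and $4$-dimensional; equivalently, $M \setminus B(R, x_0)$ deformation retracts onto a wedge of $2$-spheres. Principal $\sfSO(4)$-bundles over this $2$-complex are classified by their image in $H^2(M; \ZZ_2) \cong \pi_2(B\sfSO(4))$, namely $w_2$, and every class is realized, so there is no constraint on $w_2$. As noted just before the lemma, the restriction of any such bundle to $\partial B(R, x_0) = S^3$ is trivializable because $\pi_2(\sfSO(4)) = 0$, and the choice of trivialization together with an extension across the $4$-cell is parameterized by $\pi_3(\sfSO(4)) = \ZZ \oplus \ZZ$, labelled $(p, q)$ via the clutching identification from Lemma \ref{lemma:nonprincipal SU(2)}.

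Second, to pin down the $\beta$ term, I would construct a reference bundle $\xi_0 \to M$ realizing the prescribed $w_2$ with vanishing Euler class. Take a complex line bundle $L \to M$ with $c_1(L) = b$ and set $\xi_0 = L_{\RR} \oplus \underline{\RR^2}$, an oriented rank-four real vector bundle with structure group $\sfSO(4)$. Then $w_2(\xi_0) = c_1(L) \bmod 2 = b \bmod 2$, the Euler class vanishes because of the trivial summand, and the standard computation on the complexification $L_{\RR} \otimes_{\RR} \CC = L \oplus \bar L$ gives
\[
p_1(\xi_0) = p_1(L_{\RR}) = -c_2(L \oplus \bar L) = c_1(L)^2 = \beta.
\]

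Third, any principal $\sfSO(4)$-bundle $Q$ with $w_2(Q) \equiv b \pmod 2$ is isomorphic to $\xi_0$ over $M \setminus B(R, x_0)$ by the first step, and therefore differs from $\xi_0$ only by a top-cell regluing labelled $(p, q) \in \pi_3(\sfSO(4))$. Applying Lemma \ref{lemma:nonprincipal SU(2)} then gives $e(Q) = e(\xi_0) + (p+q) = p+q$ and $p_1(Q) = p_1(\xi_0) + 2(p-q) = \beta + 2(p-q)$, as claimed. The main obstacle will be verifying the additivity of $e$ and $p_1$ under this top-cell regluing, i.e.\ reducing the global computation on $M$ to the local one on the collapsed $S^4$; this is precisely the content of the Dold-Whitney classification of \cite{DW} with the clarifications in \cite{CE,EZ} cited in the excerpt.
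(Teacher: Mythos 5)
Your proposal is correct, and its first and third stages coincide with the paper's argument: decompose $M$ as a $2$-complex (a wedge of $2$-spheres, since $M$ is simply-connected) plus a top cell, classify over the $2$-skeleton by $w_2$ using $\pi_1(\sfSO(4))\cong\ZZ_2$ and $\pi_2(\sfSO(4))=0$, and account for the top-cell gluing by $\pi_3(\sfSO(4))\cong\ZZ\oplus\ZZ$, deferring the completeness of the invariants and the additivity of $e$ and $p_1$ under regluing to Dold--Whitney \cite{DW} with the clarifications of \cite{CE,EZ} --- exactly as the paper does. Where you genuinely diverge is in pinning down the offset $\beta$. The paper first treats $M=\bCP^2$ (Lemma \ref{lemma:nonprincipal SU(2) on CP2}), normalizing the nontrivial-$w_2$ family by computing the frame bundle $\sfSO(\bCP^2)=Q_{\bCP^2}(1,2,1)$ via the Euler number, the signature and the index theorem, and then obtains the general case by pulling back along a map $b\colon M\to\bCP^2$ representing the integral lift of $w_2$. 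You instead build the reference bundle $\xi_0=L_{\RR}\oplus\underline{\RR^2}$ with $c_1(L)=b$ and read off $w_2(\xi_0)=b\bmod 2$, $e(\xi_0)=0$, $p_1(\xi_0)=c_1(L)^2=\beta$ directly from $L_{\RR}\otimes_{\RR}\CC\cong L\oplus\bar L$. Your normalization is cleaner and self-contained (no index theorem, no detour through $\bCP^2$), and it makes transparent why the Dold--Whitney realizability constraint takes the form $p_1\equiv\beta+2e\pmod 4$; the paper's route has the mild advantage of exhibiting every such $Q$ explicitly as a pullback $b^*(Q_{\bCP^2}(1,p,q))$, which it uses later for the Aloff--Wallach examples. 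One small point worth a sentence in a final write-up: $\beta=b\cup b$ depends on the choice of integral lift $b$ of $w_2(Q)$, but changing $b$ shifts $\beta$ by a multiple of $4$, which is absorbed by $(p,q)\mapsto(p-1,q+1)$ without changing $e=p+q$, so the parametrization is consistent.
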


Let us examine the special case when $M = \bCP^2$, which is also studied in \cite{EZ}.  
A principal $\sfSO(4)$-bundle $Q$ over $\bCP^2\setminus B(R, x_0)$,
where $B(R, x_0)$ is a small open $4D$-ball of radius $R$ centered at $x_0 \in \bCP^2$, is classified by its
second Stiefel-Whitney class $w_2(Q) \in \H^2(\bCP^2\setminus B(R, x_0); \ZZ_2) \cong  \H^2(\bCP^2; \ZZ_2)\cong \ZZ_2$. 
Now $\bCP^2\setminus B(R, x_0)$ 
retracts onto $\bCP^1$, therefore $Q$ is determined by a smooth map $f: S^1 \to \sfSO(4)$ 
whose homotopy class is
$w_2(Q)$, which we will assume is the {\em nontrivial} class. The restriction of $Q$ to the boundary $\partial(\bCP^2\setminus B(R, x_0))=S^3$ is determined 
by a map $S^2 \mapsto \sfSO(4)$. But any such map is null homotopic since $\pi_2(\sfSO(4))\cong0$, 
therefore the restriction of $Q$ to the boundary
$S^3$ is trivializable. Upon fixing trivialisations over the boundary $S^3$ and also 
prescribing a map $S^3\mapsto \sfSO(4)$, fixes the Pontryagin class $p_1$ and the Euler class $e$. 
Such a map corresponds to a pair of integers $(p,q)$ and we denote  the resulting 
$\sfSO(4)$-bundle over $\bCP^2$ by $Q_{\bCP^2}(1,p,q)$. \\

\begin{lemma}
\label{lemma:nonprincipal SU(2) on CP2}
Any principal $\sfSO(4)$-bundle $Q$ over $\bCP^2$ 
is classified by the invariants $w_2(Q) \in \H^2(\bCP^2, \ZZ_2)\cong \ZZ_2$,  $p_1(Q)$ and $e(Q)$ in 
$\H^4(\bCP^2;\ZZ) \cong \ZZ$. More precisely, \\

\begin{enumerate}
\item if $w_2(Q)=0$, then in the notation of Lemma \ref{lemma:nonprincipal SU(2) on M},  for each pair of integers $(p,q)$, $Q \cong Q_{\bCP^2}(0, p, q)$ 
whose Euler number $e(Q_{\bCP^2}(0,p, q)) = p+q \in \bZ \cong \H^4(\bCP^2;\ZZ)$ and whose 1st Pontryagin number is 
$p_1(Q_{\bCP^2}(0,p, q)) = 2(p-q) \in  \bZ \cong \H^4(\bCP^2;\ZZ)$. \\

\item if $w_2(Q)\ne 0$, then  for each pair of integers $(p,q)$, $Q\cong Q_{\bCP^2}(1,p, q)$ 
as above, whose Euler number $e(Q_{\bCP^2}(1,p, q)) = p+q \in \bZ \cong \H^4(\bCP^2;\ZZ)$ and whose 1st Pontryagin number is 
$p_1(Q_{\bCP^2}(1,p, q)) = 2(p-q)+1 \in  \bZ \cong \H^4(\bCP^2;\ZZ)$. \\

\end{enumerate}

\end{lemma}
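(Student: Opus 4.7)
The plan is to derive Lemma 2.4 as a direct specialization of Lemma 2.3 to the base $M = \bCP^2$, which is compact, oriented, simply-connected and four dimensional, so the hypotheses of Lemma 2.3 are met. The only input needed beyond Lemma 2.3 is a short computation in the cohomology ring of $\bCP^2$ to identify the invariant $\beta$ in each of the two cases $w_2(Q)=0$ and $w_2(Q)\ne 0$.

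First I would recall the ring structure $\H^*(\bCP^2;\ZZ) \cong \ZZ[b]/(b^3)$, where $b \in \H^2(\bCP^2;\ZZ) \cong \ZZ$ is the generator, together with the reduction map $\H^2(\bCP^2;\ZZ) \to \H^2(\bCP^2;\ZZ_2) \cong \ZZ_2$ which carries $b$ to the nonzero class. Thus, given an arbitrary principal $\sfSO(4)$-bundle $Q$ on $\bCP^2$, the Stiefel--Whitney class $w_2(Q)$ admits an integral lift $b$ which is either $0$ or the generator of $\H^2(\bCP^2;\ZZ)$, and in the two cases one computes $\beta := b \cup b \in \H^4(\bCP^2;\ZZ) \cong \ZZ$ to be $0$ or $1$ respectively.

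With $\beta$ determined, I would plug back into Lemma 2.3: in the case $w_2(Q)=0$, taking $b=0$ gives $\beta=0$ and the formula $p_1(Q) = 2(p-q)$, $e(Q) = p+q$, so that $Q \cong Q_{\bCP^2}(0,p,q)$ in the notation of that lemma; in the case $w_2(Q) \ne 0$, the integral lift $b$ is the generator and $\beta=1$, so $p_1(Q) = 2(p-q)+1$ and $e(Q)=p+q$, giving $Q \cong Q_{\bCP^2}(1,p,q)$. The construction of $Q_{\bCP^2}(1,p,q)$ itself follows the explicit decomposition $\bCP^2 = (\bCP^2 \setminus B(R,x_0)) \cup B(R,x_0)$ already used in the body text preceding the lemma: pick the unique (up to homotopy) nontrivial map $S^1 \to \sfSO(4)$ on the $\bCP^1$ retract of the exterior to produce the desired $w_2$, trivialize on the boundary $S^3$ by nullhomotopy of $S^2 \to \sfSO(4)$, and then glue the ball back using a clutching map $S^3 \to \sfSO(4)$ labelled by $(p,q) \in \pi_3(\sfSO(4)) = \ZZ \oplus \ZZ$.

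The only nontrivial obstacle is really bookkeeping, namely verifying that the formula $p_1(Q) = 2(p-q) + \beta$ from Lemma 2.3 specialises correctly and that the parity condition $p_1(Q) \equiv \beta \pmod 2$ is consistent with the mod $2$ reduction relation $w_2(Q)^2 \equiv p_1(Q) \pmod 2$ (which is what forces the shift by $\beta$ in the first place). Once this compatibility is spelled out, the classification of part (1) and part (2) of the lemma follows immediately from Lemma 2.3, since by Poincaré duality every mod $2$ class in $\H^2(\bCP^2;\ZZ_2)$ admits an integral lift so there is indeed no further constraint on $w_2(Q)$, and the pair of integers $(p,q)$ ranges freely in each case.
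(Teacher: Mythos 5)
Your proposal is correct and follows essentially the same route as the paper: specialize the Dold--Whitney classification of Lemma \ref{lemma:nonprincipal SU(2) on M} to $M=\bCP^2$, note that an integral lift $b$ of $w_2(Q)$ may be taken to be $0$ or the generator of $\H^2(\bCP^2;\ZZ)$ so that $\beta = b\cup b$ is $0$ or $1$, and realize the bundles explicitly by the same decomposition of $\bCP^2$ into $\bCP^2\setminus B(R,x_0)$ (retracting to $\bCP^1$, carrying $w_2$ via $\pi_1(\sfSO(4))\cong\ZZ_2$) and a $4$-ball glued by a clutching map in $\pi_3(\sfSO(4))\cong\ZZ\oplus\ZZ$. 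Your added remark on the compatibility with $p_1\equiv w_2^2 \pmod 2$ is a correct consistency check that the paper leaves implicit.
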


Consider the example of the oriented frame bundle,
$$
\begin{CD}
\sfSO(4)@>>> \,  \sfSO(\bCP^2)\\
&& @V \pi VV \\
&& \bCP^2
\end{CD}
$$
over $\bCP^2$. Since 
$\bCP^2$ is not a spin manifold, it follows that $w_2( \sfSO(\bCP^2)) \ne 0$. One calculates
the Euler number, $e(\bCP^2) = 3$ and the signature, $\text{sign}(\bCP^2) =1$. By the 
Atiyah-Singer index theorem, it follows that $p_1(\bCP^2)=3$. By Lemma \ref{lemma:nonprincipal SU(2) on CP2},
$3=p+q$ and $3=2(p-q)+1$. Therefore $p=2, q=1$ and $\sfSO(\bCP^2) \cong Q_{\bCP^2}(1,2, 1).$\\

We can use Lemma \ref{lemma:nonprincipal SU(2) on CP2} to make Lemma \ref{lemma:nonprincipal SU(2) on M} more explicit. 
Since $M$ is a simply-connected, compact, oriented 4 dimensional manifold, one 
uses the universal coefficient theorem and Poincar\'e duality to see that $\H^2(M;\ZZ)$ 
is torsion free and $\H^2(M;\ZZ_2)\cong (\ZZ_2)^{b_2(M)}$. Given a principal $\sfSO(4)$-bundle $Q$ over $M$
with $w_2(Q)\ne 0$,
then $w_2(Q) = b\mod 2$ for some $b\in \H^2(M;\ZZ) \cong  [M, \bCP^2]$. So $b:M \to \bCP^2$ and one 
can pullback the bundles in Lemma \ref{lemma:nonprincipal SU(2) on CP2} via $b$, and 
show that $Q= b^*(Q_{\bCP^2}(1,p,q))$, for some $p, q$ that are integers as in Lemma \ref{lemma:nonprincipal SU(2) on M}.
\\

When $\dim(M)>4$, the invariants in Lemma \ref{lemma:nonprincipal SU(2) on M} do not completely classify principal $\sfSO(4)$-bundles.
However, they do classify these bundles rationally.

\section{Examples of oriented non-principal $\sfSU(2)$ bundles}
\label{sect:ex}

This section contains examples of oriented non-principal $\sfSU(2)$-bundles that are obtained from 
principal $\sfSO(4)$-bundles via the associated bundle construction.
Some of the examples can be found in \cite{Taubes}, Section 10.6.\\

\noindent {\bf Example 1.} For each pair of integers $(p,q)$, define a nonprincipal $\sfSU(2)$-bundle $E(p, q)$ over $S^4$ as the associated
bundle $E(p,q) = Q(p,q) \times_{\sfSO(4)}\sfSU(2)$, where $Q(p,q)$ is the principal $\sfSO(4)$-bundle 
constructed above Lemma \ref{lemma:nonprincipal SU(2)}. In particular, $E(1,1) = \sfSO(5) \times_{\sfSO(4)}\sfSU(2)$.
The Euler number of $E(p, q)$ is $e(E(p, q)) = p+q \in \bZ \cong H^4(S^4)$ and the 1st Pontryagin number of $E(p, q)$ is 
$p_1(E(p, q)) = 2(p-q) \in  \bZ \cong H^4(S^4)$ \cite{Milnor56}.

Explicitly, let $U_N$ 
be the open hemisphere in $S^4$ with the north pole as centre and $U_S$ 
be the open hemisphere in $S^4$ with the south pole as centre. Consider the trivial bundles $U_N \times \sfSU(2)$
and $U_S \times \sfSU(2)$ with transition function $\tilde g(x, h) = (x, g(x)^p h g(x)^q)$ for all 
$(x, h) \in (U_N \cap U_S) \times \sfSU(2)$. Here $g: U_N \cap U_S \to \sfSU(2)$ is a degree 1 map, where we 
observe that $U_N \cap U_S$ is a retraction to $S^3$. Let $E(p, q)$ be the (a priori) non-principal $\sfSU(2)$-bundle obtained via the 
clutching construction. Then $E(p, q)$ is a principal $\sfSU(2)$-bundle if and only if
either $p=0$ or $q=0$,  $E(p, q)$ is homeomorphic to $S^7$ if and only if $p+q=\pm1$. If $p+q=\pm1$ and $(p-q)^2-1\ne 0$
(mod 7), then $E(p, q)$ is not diffeomorphic to $S^7$, and in fact not mutually diffeomorphic. See \cite{CE} for 
details on the homeomorphism and diffeomorphism statements.
\\

\noindent {\bf Example 2.} Let $M$ be a compact, oriented, simply-connected, 4 dimensional manifold. 
Then for integers $\beta, p, q$ satisfying the constraints in Lemma \ref{lemma:nonprincipal SU(2) on M}, 
we have a principal $\sfSO(4)$-bundle $Q_M(\beta, p, q)$ over $M$. Define the nonprincipal $\sfSU(2)$-bundle 
$E_M(\beta, p, q)$ over $M$ to be the associated bundle 
$E_M(\beta, p,q) = Q_M(\beta, p,q) \times_{\sfSO(4)}\sfSU(2)$. Then the 2nd Stiefel-Whitney class, Euler class and Pontryagin class of 
$E_M(\beta, p,q)$ coincide with those of $Q_M(\beta, p,q)$ as in Lemma \ref{lemma:nonprincipal SU(2) on M}.\\

For examples that are possibly more directly relevant to physics, see Section \ref{Sasakian manifolds}.

\section{Gysin sequence and construction of spherical T-duals for oriented non-principal 
$\sfSU(2)$-bundles }
\label{sec:gysin}

The goals in this section are to state the relevant Gysin sequence for (oriented) non-principal $\sfSU(2)$-bundles,
and then to construct spherical T-duals for oriented non-principal $\sfSU(2)$-bundles.

Given an (oriented) non-principal $\sfSU(2)$-bundle 
$E\xrightarrow{\pi} M$, it
comes with a Gysin sequence, Proposition 14.33, \cite{Bott-Tu}, 
\begin{equation}\label{gysin}
\cdots \to \H^p(M; \bZ) \xrightarrow{\cup e(E)} \H^{p+4}(M; \bZ)  
\xrightarrow{\pi^*} \H^{p+4}(E; \bZ) \xrightarrow{\pi_!}  
\H^{p+1}(M;\bZ) \xrightarrow{\cup e(E)} \cdots .
\end{equation}
where $e(E)$ denotes the Euler class of $E$.

For each pair of integers $(p,q)$, consider the nonprincipal $\sfSU(2)$-bundle $E(p, q)$ over $S^4$ 
with $w_2(E(p,q))=0$, $e(E(p,q))=p+q$ and $p_1(E(p,q))=2(p-q)$. 
Let $H = h\, \text{vol}$ be a 7-cocycle on $E(p,q)$
\begin{equation}\label{SU(2)1}
\begin{CD}
\sfSU(2) @>>> \,  E(p,q) \\
&& @V \pi VV \\
&& S^4 \end{CD}
\end{equation}
The orientation of $S^4$ and $\sfSU(2)$ together with the Gysin sequence in Eqn.~\eqref{gysin} imply that $\pi_*$ 
is a canonical isomorphism $\H^7(E(p,q);\ZZ)\cong\H^4(S^4;\ZZ)\cong\Z$.  Consider the family of spherical T-dual bundles 
\begin{equation}\label{SU(2)2}
\begin{CD}
\sfSU(2) @>>> \,  E(\hat p, \hat q) \\
&& @V \widehat\pi VV \\
&& S^4 \end{CD}
\end{equation}
with the property that $e(E(\hat p, \hat q))= \hat p + \hat q = h$ while the dual 7-cocycle $\widehat H = \hat h \,\text{vol} 
\in\H^7(E(\hat p, \hat q);\ZZ)$ satisfies $e(E(p,q)) = 
p+q = \hat h$ by the isomorphism $\widehat\pi_* \colon \H^7(E(\hat p, \hat q) ;\ZZ)\cong\H^4(S^4;\ZZ)\cong\Z$.  \\

{\em 
Thus we see that for each integer $\hat p$, there is a spherical T-dual pair $(E(\hat p, h-\hat p), (p+q) \,\mathrm{vol})$ over $S^4$
to any fixed spherical pair $(E(p, q), h\, \mathrm{vol})$ over $S^4$.\\
}

More generally, 
let $M$ be a compact, oriented, simply-connected, 4 dimensional manifold and
consider for each triple of integers $(\beta, p,q)$, consider the nonprincipal $\sfSU(2)$-bundle 
\begin{equation}\label{SU(2)M}
\begin{CD}
\sfSU(2) @>>> \,  E_M(\beta,p,q) \\
&& @V \pi VV \\
&& M \end{CD}
\end{equation}
with $w_2(E_M(\beta,p,q)) = b\mod 2\neq 0$, $e(E_M(\beta,p,q))=p+q$ and $p_1(E_M(\beta,p,q))=2(p-q) +\beta$, where $b, \beta$ are 
as in Lemma \ref{lemma:nonprincipal SU(2) on M} and Example 2 in Section \ref{sect:ex}.  Arguing as above, the family of spherical T-dual bundles 
\begin{equation}\label{SU(2)2}
\begin{CD}
\sfSU(2) @>>> \,  E_M(\beta,\hat p, \hat q) \\
&& @V \widehat\pi VV \\
&& M \end{CD}
\end{equation}
with the property that $e(E_M(\beta,\hat p, \hat q))= \hat p + \hat q = h$, $p_1(E_M(\beta,\hat p, \hat q))=2( \hat p-\hat q) +\beta$, 
$w_2(E_M(\beta,\hat p, \hat q)) = w_2(E_M(\beta,p,q))$ while the dual 7-cocycle $\widehat H = \hat h \,
\text{vol} \in\H^7(E_M(\beta,\hat p, \hat q);\ZZ)$ 
satisfies $e(E_M(\beta,\hat p, \hat q)) = 
p+q = \hat h$.\\

{\em 
Thus we see that for each integer $\hat p$, there is a spherical T-dual pair $(E_M(\beta,\hat p, h-\hat p), (p+q) \,\mathrm{vol})$ over $M$
to any fixed spherical pair $(E_M(\beta,p, q), h\, \mathrm{vol})$ over $M$.\\
}

In contrast, recall from \cite{BEM14} that in the case of spherical pairs $(P(q), h)$ where $P(q)$ is a {\em principal} $\sfSU(2)$-bundle over 
a compact, oriented 4 dimensional manifold $M$ with 2nd Chern class $c_2(P(q))=q\in \ZZ \cong \H^4(M; \ZZ)$ and 
$h \in \ZZ \cong \H^7(P(q);\ZZ)$ an integral 7-cocycle on $P(q)$, there is a {\em unique} spherical T-dual pair $(P(h), q)$.
The reason for the family of spherical T-duals in the case of
non-principal $\sfSU(2)$-bundles over $M$ is that a
non-principal $\sfSU(2)$-bundle is no longer  determined by just 
its Euler class (and its second Stiefel-Whitney class). 
In fact, as we have seen, there is a lattice of non-principal $\sfSU(2)$-bundles over $M$ 
having the same second Stiefel-Whitney class $w_2$ and Euler class $e$. 

\begin{example}
The Aloff-Wallach space $W_{k,l}$ \cite{AW}, 
defined as being the homogeneous space $\sfSU(3)/T_{k,l}$,
where the circle subgroup $T_{k,l} = \mathrm{diag}(z^k, z^l, z^{-(k+l)})$, $|z|=1$, is a non-principal 
$S^3$-bundle over $\bCP^2$ iff $|k+l|=1$.  We have 
$W_{p,1-p} \cong S_{-1,p(p-1)}$, where $S_{p,q} = E_{\bCP^2}(1,p, q)$, in our notation above,
and $e(S_{a,b}) = a-b$, $p_1(S_{a,b})=2(a+b)+1$ \cite{EZ}.  
Hence
\begin{equation*}
e(W_{p,1-p} ) = - (p^2 - p +1) \,,\qquad p_1(W_{p,1-p} ) = 2p(p-1)-1\,, \qquad 
w_2(W_{p,1-p} ) =1 \,.
\end{equation*}
Note that for $(k,l)=(p,1-p)$:
\begin{equation*}
k^2 + l^2 + kl = p^2 + (p-1)^2 - p(p-1) = p^2 -p +1 \,,
\end{equation*}
consistent with $H^4(W_{k,l},\ZZ) \cong Z_{|k^2 + l^2 + kl |}$ \cite{AW}.
Note that $(W_{p,1-p}, h)$ is dual to any $S_{\hat p,\hat q}$ with $h=\hat p - \hat q$.
In particular, to be dual to an Aloff-Wallach space $W_{\hat p , 1 -\hat p}$ we need
$\hat p (\hat p -1 ) - (1 + h ) = 0$, i.e.\
\begin{equation*}
\hat p = \frac12 \left( 1 \pm \sqrt{ 1 - 4 (1+h) } \right) \,.
\end{equation*}
After parametrizing these values of $h$ by integers $\hat p$, we  find the duality
between $(W_{p,1-p}, h= -(\hat p^2 -\hat p +1) )$ and $(W_{\hat p,1-\hat p}, \hat h= -(p^2 - p +1) )$.
This gives examples of cases where 
3-Sasakian manifolds are spherical T-dual to other 3-Sasakian manifolds. See also the last 
section for a further discussion of 3-Sasakian manifolds.
\end{example}

\section{Isomorphism of integral 7-twisted cohomologies and  K-theories for spherical 
T-dual pairs}
\label{sec:T-duality isom}

When $M$ is a compact, oriented, simply-connecetd, 4 dimensional manifold, we will prove in this section 
that the spherical T-duality map induces parity changing isomorphisms between the integral 7-twisted cohomologies of 
$(E_M(\beta,p, q), h\, \text{vol})$ and $(E_M(\beta,\hat p, h-\hat p), (p+q) \,\text{vol})$ for each integer $\hat p$.

We begin by considering the case of $S^4$ and the non-principal $\sfSU(2)$-bundle  $E(p,q)$ over it. The 7-twisted cohomology 
$\H^{even/odd}(E(p,q), h\, \text{vol};\ZZ)$ is defined as the cohomology of the $\ZZ_2$-graded complex $\left(\H^{even/odd}(E(p,q); \ZZ),  
h \text{vol}\right)$. Using the Gysin sequence
in Eqn.~\eqref{gysin} to calculate the cohomology groups
$\H^{even/odd}(E(p,q); \ZZ)$, we obtain for $p+q\ne 0$
\begin{align*}
\H^j(E(p,q); \ZZ) & \cong 0, \, j\ne 0,4,7 \,,\\
\H^4(E(p,q); \ZZ) & \cong \ZZ_{p+q} \,, \\
\H^7(E(p,q); \ZZ) & \cong\ZZ \cong \H^0(E(p,q); \ZZ)\,.
\end{align*}
It follows that for $p+q\ne 0, h\ne 0$ one has
\begin{align*}
\H^{even}(E(p,q), h \text{vol}; \ZZ) & \cong  \ZZ_{p+q} \,,\\
\H^{odd}(E(p,q), h \text{vol}; \ZZ) & \cong \ZZ_h\,.
\end{align*}
Therefore, by our explicit computation above, for each integer $\hat p$, there exists an isomorphism of 7-twisted cohomology 
groups over the integers with a parity change,
\begin{align*}
\H^{even}(E(p,q), h \text{vol}; \ZZ) &\cong \H^{odd} (E(\hat p, h-\hat p), (p+q) \,\text{vol}; \ZZ)\,,\\
\H^{odd}(E(p,q), h \text{vol}; \ZZ) &\cong \H^{even} (E(\hat p, h-\hat p), (p+q) \,\text{vol}; \ZZ)\,.
\end{align*}
A choice of transgression of the Euler class gives an explicit isomorphism, as outlined in the next section.\\

More generally, let $M$ be a simply-connected compact, oriented, 4 dimensional manifold, and consider
the non-principal $\sfSU(2)$-bundle $E_M(\beta,p,q)$ over $M$ as in Lemma \ref{lemma:nonprincipal SU(2) on M} and Example 2 in Section \ref{sect:ex},
together with the 7-cocycle  $H = h \text{vol}$ on $E_M(\beta,p,q)$. 
Again, use the Gysin sequence of Eqn.~\eqref{gysin} to calculate the cohomology groups
$\H^{even/odd}(E_M(\beta,p,q); \ZZ)$, we obtain for $p+q\ne 0$
\begin{align*}
\H^j(E_M(\beta,p,q); \ZZ) & \cong \H^{4-j}(M; \ZZ), \, j= 0,1,2,3 \,,\\
\H^4(E_M(\beta,p,q); \ZZ) & \cong \ZZ_{p+q} \oplus \H^1(M;\ZZ) \,,\\
\H^{7-j}(E_M(\beta,p,q); \ZZ) & \cong \H^{4-j}(M; \ZZ), \, j= 0,1,2,3 \,.
\end{align*}
It follows that for $p+q\ne 0, h\ne 0$ one has
\begin{align*}
\H^{even}(E_M(\beta,p,q), h \text{vol}; \ZZ) & \cong \ZZ_{p+q}  \oplus \H^2(M;\ZZ) \oplus  \H^1(M;\ZZ)\oplus  \H^3(M;\ZZ) \,,\\
\H^{odd}(E_M(\beta,p,q), h \text{vol}; \ZZ) & \cong \ZZ_h \oplus \H^2(M;\ZZ) \oplus \H^1(M;\ZZ) \oplus  \H^3(M;\ZZ)\,.
\end{align*}
So it follows by our explicit computation above that for each integer $\hat p$, there is an isomorphism of 7-twisted cohomology groups over the integers with a parity change,
\begin{align*}
\H^{even}(E_M(\beta,p,q), h \text{vol}; \ZZ) &\cong \H^{odd} (E_M(\beta,\hat p, h-\hat p), (p+q) \,\text{vol}; \ZZ),\\
\H^{odd}(E_M(\beta,p,q), h \text{vol}; \ZZ) &\cong \H^{even} (E_M(\beta,\hat p, h-\hat p), (p+q) \,\text{vol}; \ZZ).\\
\end{align*}

The proof of the isomorphism, up to an extension problem, between 7-twisted K-theory and twisted cohomology proceeds 
similarly to Section 6.1.2 in \cite{BEM14} in the case of spherical pairs $(E_M(\beta,p,q), h \text{vol})$, 
\begin{align*}
\H^{even}(E_M(\beta,p,q), h \text{vol}; \ZZ) &\cong \K^0 (E_M(\beta,p,q), h \text{vol})\,,\\
\H^{odd}(E_M(\beta,p,q), h \text{vol}; \ZZ) &\cong \K^1(E_M(\beta,p,q), h \text{vol})\,.
\end{align*}
Therefore we conclude that there is an isomorphism of 7-twisted K-theories for spherical T-dual pairs with a parity change,
\begin{align*}
\K^0(E_M(\beta,p,q), h \text{vol}) &\cong \K^1(E_M(\beta,\hat p, h-\hat p), (p+q) \,\text{vol})\,,\\
\K^1(E_M(\beta,p,q), h \text{vol}) &\cong \K^0 (E_M(\beta,\hat p, h-\hat p), (p+q) \,\text{vol})\,.\\
\end{align*}

We now give more details. For brevity we will set $E=E_M(\beta,p,q)$, the coefficient rings will always be $\Z$.  
The 7-twisted K-theory can be constructed using a two step spectral sequence with differentials $d_1=Sq^3$ and 
$d_2=h \text{vol}\cup$. The first differential acts trivially on $\H^0(E)$, $\H^1(E)$ and $\H^2(E)$ as $Sq^3:\H^k(E)\rightarrow\H^{k+3}(E)$ 
annihilates classes of dimension less than 3.  Similarly it annihilates $\H^5(E)$, $\H^6(E)$ and $\H^7(E)$ because there are no nontrivial 
classes of dimension greater than 7.  The image of $Sq^3\H^4(E)$ is a $\Z_2$-torsion element of $\H^7(E)$ but $E$ is oriented so 
there are no such nontrivial elements.  

In \cite{BEM14} it was shown that $Sq^3\H^3(E)$ is trivial using the Gysin sequence for principal bundles and 
applying the K\"unneth theorem to the case in which the characteristic class $c_2$ vanishes.  As the 
$S^3$-bundle is no longer necessarily principal in the present note, the characteristic class which appears in the Gysin 
sequence is the Euler class $p+q$ and when it vanishes the bundle is not necessarily trivial, therefore the K\"unneth 
theorem may not be applied.  However in the present case we have assumed that $M$ and therefore $E$ is simply-connected.  
By Poincar\'e duality $\H^6(E)=0$ and so again $Sq^3\H^3(E)=0$.  Therefore the first differential $d_1=Sq^3$ acts trivially on 
$\H^\bullet(E)$.  The cohomology of the second differential $d_2=h \text{vol}\cup$ then is, up to extension problem, isomorphic to the 
7-twisted K-theory.  Therefore, for brevity restricting our attention to the nontrivial case $h\neq 0$, the 7-twisted K-groups are
\begin{eqnarray}
K^0(E, h)&\cong&\H^2(E)\oplus\H^4(E)\oplus\H^6(E)\cong \H^2(M)\oplus \Z_{p+q}\oplus \H^3(M)\nonumber\\
K^1(E,h)&\cong&\H^1(E)\oplus\H^3(E)\oplus\H^5(E)\oplus \Z_h\cong \H^3(M)\oplus\H^2(M)\oplus \Z_h.
\end{eqnarray}
Therefore spherical T-duality interchanges $p+q$ and $h$ and so exchanges $K^0$ and $K^1$ as claimed.

\section{T-duality isomorphisms via differential geometry}
\label{sec:non-principal}

\subsection{Euler class and Pfaffian form} Let $S$ be a nonprincipal $S^3$-bundle over a 
simply-connected compact, oriented, 4 dimensional manifold $M$. Then $S=S(E)$ is the unit sphere bundle 
of rank $4$ oriented real Riemannian vector bundle $E$ over $M$. Let $\nabla$ be a connection on $E$ preserving the 
Riemannian metric on $E$, having curvature $\Omega_\nabla$. Since $\Omega_\nabla$ is a 2-form on 
$M$ with values in skew-symmetric endomorphisms, one can consider the Pfaffian, $\text{Pfaff}(\Omega_\nabla)$, 
which is a degree 4 differential form on $M$. Then the Euler class $e(E) = e(S) \in \H^4(M;\ZZ)$ is represented by 
$\frac{1}{4\pi^2}\text{Pfaff}(\Omega_\nabla)$.\\

\subsection{Transgression of the Euler class} The pullback of the Euler class $\pi^*(e(S))=0$ where $\pi:S\to M$ is the 
projection. The transgression of the Euler class is a degree 3 differential form $\tau(\nabla)$ on $S$ such that 
$\pi^*(\text{Pfaff}(\Omega_\nabla)) = d \tau(\nabla)$. S.S.~Chern  was the first to define such a form, although 
the explicit expression that he obtained is somewhat complicated  \cite{C1, C2}. An alternate expression for $\tau(\nabla)$ was obtained in 
Section 7 of \cite{MQ} using the Gaussian shaped
Chern-Weil representative of the Thom class defined there using superconnections. See also the discussion in Section  3.3 of \cite{Zhang}. 

In more detail, let $\cA_t = \frac{t^2}{2} |x|^2 + t \nabla x - \pi^*(\Omega_\nabla)$ be defined on $E$, where $x$ is the fibre variable.   Let $\displaystyle \int^B$ 
denote the Berezin (or Fermionic) integral. When restricted to the  sphere bundle $S$, one has 
\begin{align*}
 \pi^*(\text{Pfaff}(\Omega_\nabla)) &=d \tau(\nabla)
 \tau(\nabla)  = \int_0^\infty dt \int^B \left(x\, e^{-\cA_t}\right).
 \end{align*}
 Also $\displaystyle \int_{S^3}  \tau(\nabla) =1$, 
 so the transgression form $\tau(\nabla)$ is the analog of the Chern-Simons form used in \cite{BEM14}.

 \subsection{T-duality isomorphisms via differential geometry}
 Given a pair of spherical T-dual pairs $(S, H)$, and $(\widehat S, \widehat H)$, we can decompose them as
 \begin{align*}
H &= \pi^* H_7 + \pi^*\text{Pfaff}(\Omega_{\widehat \nabla} ) \wedge \tau(\nabla)\,,\\
\widehat H &= \widehat\pi^* H_7 + \widehat\pi^* \text{Pfaff}(\Omega_{\nabla} ) \wedge \tau(\widehat\nabla)\,.
 \end{align*}
 Here $\widehat \nabla$ is the connection on the rank $4$ oriented real Riemannian vector bundle $\widehat E$ over $M$
 such that $\widehat S$ is the unit sphere bundle of $\widehat E$.
One checks easily that $dH = 0 = d\widehat H$.

Let $\omega$ be a $d_H$-closed form representing a class in $\H_H^{\rm{even/odd}}(S)$. Lifting to the correspondence space 
$S\times_{M} \widehat S$, applying the kernel $\exp(\tau(\nabla)\wedge \tau(\widehat \nabla))$, and integrating over the fiber, we define the 
{\em T-duality transform}
\begin{equation}
T_*(\omega) = \int_{\sfSU(2)} \exp(\tau(\nabla)\wedge \tau(\widehat \nabla)) \wedge \widehat p^*\omega,
\end{equation}
which one checks is a $d_{\widehat H}$-closed form. Here
\begin{equation*}
\xymatrix{
& \ar[dl]_{\widehat p} S\times_{M} \widehat S \ar[dr]^{p}&\\
(S,H)&&
(\widehat S, \widehat H)
}
\end{equation*}

\begin{theorem} \label{homlem}
The T-duality transform $T_*$ induces an isomorphism of 7-twisted cohomology groups
\begin{equation}
T_* : \H_H^{\rm{even/odd}}(S) \stackrel{\cong}{\longrightarrow} \H_{\widehat H}^{\rm{odd/even}}(\widehat S)\,.
\end{equation}
\end{theorem}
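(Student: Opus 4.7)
The plan is to proceed in three main steps, mirroring the standard circle-bundle T-duality argument adapted to the $\sfSU(2)$-bundle setting.

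First, I would establish the key differential identity on the correspondence space $S\times_{M}\widehat S$. Writing $r\co S\times_{M}\widehat S\to M$ for the base projection, the decompositions of $H$ and $\widehat H$ given above, combined with the transgression property $d\tau(\nabla)=r^{*}\mathrm{Pfaff}(\Omega_{\nabla})$ and the analogous statement for $\widehat\nabla$, yield
\[
p^{*}\widehat H - \widehat p^{\,*} H \;=\; r^{*}\mathrm{Pfaff}(\Omega_{\nabla})\wedge\tau(\widehat\nabla)\;-\;r^{*}\mathrm{Pfaff}(\Omega_{\widehat\nabla})\wedge\tau(\nabla)\;=\;d\bigl(\tau(\nabla)\wedge\tau(\widehat\nabla)\bigr).
\]
Since $\tau(\nabla)$ and $\tau(\widehat\nabla)$ are odd-degree forms their squares vanish, so the kernel truncates as $\exp(\tau(\nabla)\wedge\tau(\widehat\nabla)) = 1+\tau(\nabla)\wedge\tau(\widehat\nabla)$, which will make all computations explicit.

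Second, I would prove that $T_{*}$ is a chain map between the $\ZZ_2$-graded twisted de Rham complexes, i.e.\ $d_{\widehat H}\circ T_{*}=\pm\, T_{*}\circ d_{H}$. This is a direct manipulation using the graded Leibniz rule, the projection formula $p_{!}(p^{*}\alpha\wedge\beta)=\alpha\wedge p_{!}\beta$, and Stokes' theorem for fiber integration along the closed fiber $\sfSU(2)=S^{3}$; the $d\mathcal{T}$ term that appears when $d$ crosses the kernel is precisely what is needed, by the identity of the first step, to exchange multiplication by $\widehat p^{\,*}H$ for multiplication by $p^{*}\widehat H$. A symmetric construction produces the candidate inverse
\[
\widehat T_{*}(\eta) \;=\; \widehat p_{\,!}\!\left(\exp\bigl(\tau(\nabla)\wedge\tau(\widehat\nabla)\bigr)\wedge p^{*}\eta\right),
\]
and the same calculation shows it is a chain map in the other direction.

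Third, and this is the main obstacle, I would show that $T_{*}$ is an isomorphism on cohomology. The most transparent route is to compute $\widehat T_{*}\circ T_{*}$ by Fubini on the double fiber $S\times_{M}S\times_{M}\widehat S$: the normalization $\int_{S^{3}}\tau(\nabla)=1$ produces the identity contribution, while the remaining mixed terms must be shown to be $d_{H}$-exact, which is the technically demanding piece since, unlike the circle case, the transgression is now a $3$-form whose "horizontal" components on $S\times_{M}\widehat S$ contribute nontrivially before integration. A cleaner alternative, and the one I would actually pursue, is to leverage the abstract isomorphisms of $7$-twisted integral cohomology already established in Section~\ref{sec:T-duality isom}: one fits $T_{*}$ into a morphism between the two Gysin long exact sequences associated to $S\to M$ and $\widehat S\to M$ together with their twists, verifies commutativity of the squares (the commutativity on base classes reduces to $\int_{S^{3}}\tau(\nabla)=1$, while the Euler-class squares commute by the decomposition of $H$), and invokes the five-lemma to conclude the isomorphism, with the explicit differential-geometric formula for $T_{*}$ giving the realization at the cochain level.
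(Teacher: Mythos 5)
Your first two steps coincide with the paper's argument: the paper likewise derives $d(\tau(\nabla)\wedge\tau(\widehat\nabla))=-\widehat p^{\,*}H+p^{*}\widehat H$ from the two decompositions of $H$ and $\widehat H$, and then verifies the chain-map identity $T_{*}\circ d_{H}=-d_{\widehat H}\circ T_{*}$ by exactly the manipulation you describe (graded Leibniz, fiberwise Stokes, and the projection formula $p_{*}(\alpha\wedge p^{*}\beta)=p_{*}(\alpha)\wedge\beta$); your observation that the kernel truncates to $1+\tau(\nabla)\wedge\tau(\widehat\nabla)$ is implicit there. The divergence is in the final step. The paper disposes of injectivity/surjectivity with a one-line citation to Section~5 of \cite{BEM14}, where the isomorphism is obtained by dimensionally reducing twisted forms along the $S^{3}$ fiber --- writing every $d_{H}$-cohomology class in the normal form $\pi^{*}A+\pi^{*}B\wedge\tau(\nabla)$ and checking that $T_{*}$ interchanges the two components up to sign, so that $T_{*}$ is manifestly invertible. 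Your route (a), computing $\widehat T_{*}\circ T_{*}$ on the double correspondence space, is essentially equivalent to this once the mixed terms are organized, and your warning that the horizontal components of the $3$-form transgression make this messier than the circle case is well taken; the normal-form decomposition is precisely the device that tames those terms. Your route (b), the five lemma applied to a morphism of exact sequences, is a genuinely different (Bunke--Schick-style) argument, but note that the untwisted Gysin sequence \eqref{gysin} does not receive the twisted groups $\H_{H}^{\bullet}(S)$ directly: you would first have to establish a twisted Gysin-type six-term exact sequence relating $\H_{H}^{\mathrm{even/odd}}(S)$ to cohomology of $M$ (e.g.\ from the filtration by horizontal degree, with connecting map built from $e(E)$ and $\pi_{*}H$), and check that $T_{*}$ intertwines the connecting maps, not just the pullback and pushforward squares. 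That is more infrastructure than the paper builds, though it buys a proof independent of the explicit form-level decomposition. Either completion is viable; as written your proposal is a correct outline with the isomorphism step still to be carried out in detail, which is the same state of affairs as in the paper itself.
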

\begin{proof}
Since $d(\tau(A)\wedge \tau(\widehat A))=-\widehat p^*H+ p^*{\widehat H}$, we have
\bea
T_*(d_H\omega)&=&\int_{\sfSU(2)} \exp(\tau(A)\wedge \tau(\widehat A)) \wedge \widehat p^*d\omega\nonumber -\int_{\sfSU(2)}
\exp(\tau(A)\wedge \tau(\widehat A)) \wedge \widehat p^*H\wedge\widehat p^*\omega\nonumber\\
&=&-d\int_{\sfSU(2)} \exp(\tau(A)\wedge \tau(\widehat A)) \wedge \widehat p^*\omega\nonumber\\
&&-\int_{\sfSU(2)} \exp(\tau(A)\wedge \tau(\widehat A)) \wedge (\widehat p^*H-\widehat p^*H+ p^*{\widehat H})\wedge\widehat p^*\omega\nonumber\\
&=&-d_{\widehat{H}}T_*(\omega) \,, \label{tcomm}
\eea
where in the last step we used the fact that $\int_{\sfSU(2)}= p_*$, together with the adjunction property of the pullback
\begin{equation}
p_*(\alpha\wedge p^*\beta)=(p_*(\alpha))\wedge\beta\,.
\end{equation}
Eqn.~(\ref{tcomm}) may be summarized by the statement $T\circ d_{H}=-d_{{\widehat H}}\circ T$.
Therefore $T$ takes $d_H$-exact (closed) forms on $P$ to $d_{\widehat H}$-exact (closed) forms on $\widehat P$ 
and so it induces a well-defined homomorphism on the twisted cohomology groups.  
On verifies exactly as in Section 5 of \cite{BEM14} that it is an isomorphism.
\end{proof}

\section{Examples: Sasakian Manifolds}
\label{Sasakian manifolds}

In the present context of non-principal bundles, again we have proven an isomorphism of twisted cohomology and so again 
spherical T-duality is guaranteed to provide an isomorphism of certain conserved charges in string theory.  However, 
as we now no longer insist that the sphere bundles be principal, there are now a number of new examples available.

One class of compactifications which has been of great interest over the past 30 years is that of Sasakian manifolds.  
More concretely, type IIB supergravity comes with a 10-dimensional compactification manifold which may be 
$AdS_5\times M^5$ and 11-dimensional supergravity comes with an 11-dimensional manifold which may be 
$AdS_4\times M^7$.  Here $AdS$ is a Lorentzian hyperbolic space.  If $M$ is a Sasaki-Einstein or 3-Sasakian manifold, 
then the corresponding compactification preserves some supersymmetry, for example in the case of a 7-manifold $M^7$ 
it will preserve $\mathcal{N}=2$ supersymmetry in the Sasaki-Einstein (and $\mathcal N=3$ in the 3-Sasakian) case.  

The examples of Sasaki-Einstein and 3-Sasakian manifolds \cite{CEZ} which have received the most attention in the physics 
literature are essentially all $S^3$ bundles.  The oldest and simplest is the 7-sphere $S^7$, which is a principal 
$S^3$-bundle over $S^4$ with $e=$vol.  Already in 5-dimensions the conifold $M^5=T^{1,1}$ is usually described 
as an $S^1$-bundle over the product $S_a^2\times S_b^2$ of 2-copies of the 2-sphere.  The Chern class is equal to 
the generator $a\in\H^2(S^2_a)$.  Combining the fiber with $S^2_a$ one obtains a trivial $S^3$ fibration over $S^2_b$.  
The slightly more complicated examples $Y^{p,q}$ \cite{GMSW04} are circle bundles over $S^2_a\times S^2_b$ with Chern class 
$pa+qb$ where $a$ and $b$ generate $\H^2(S^2_a)$ and $\H^2(S^2_b)$ respectively.  When $p$ and $q$ are relatively 
prime again these are homeomorphic to $S^3\times S^2$ and so provide trivial $S^3$-bundles.

The 7-dimensional case is nontrivial in the current context because in these compactifications the volume of $M^7$ is a 
monotonic function of the 7-flux integrated over $M^7$ and in particular the $M^7$ only has nonvanishing volume when 
the 7-flux, which we have called $h$vol but is traditionally called $*G_4$, is nonvanishing.  Therefore, even if the 7-dimensional 
$S^3$-bundle is trivial, the spherical T-dual will be nontrivial.  

The two most popular Sasaki-Einstein examples of $M^7$ are $M^{1,1,1}$ \cite{Wi81} and $Q^{1,1,1}$ \cite{DFvN84}.  
The first is a circle bundle over $\bCP^1\times\bCP^2$ with Chern class $2a+3b$ where $a$ and $b$ are the 
generators of $\H^2(\bCP^1)$ and $\H^2(\bCP^2)$ respectively.  The second is a circle bundle over 3 copies 
of $\bCP^1$ where the Chern class is the sum of the generators of $\H^2(\bCP^1)$ of the three copies.  In the case 
case of $Q^{1,1,1}$, the circle fiber can be combined with any one of the $\bCP^1$'s on the base to 
create an $S^3$ fibered over the remain $\bCP^1\times\bCP^1$.  The fibration will have a trivial Euler class.  
Nonetheless, as $h\neq 0$ in these compactifications, the spherical T-duals will necessarily have nontrivial 
Euler classes.  As a result it seems likely that the spherical T-duals do not admit a Sasaki-Einstein metric and 
so will not manifest as much supersymmetry as the original compactifications.  Nonetheless an isomorphism of 
those conserved charges classified by twisted cohomology is guaranteed by the isomorphism of the twisted cohomology.



\end{document}